\newtheorem*{theorem*}{Theorem} %
\title{The Banks Set and the Bipartisan Set\\ May Be Disjoint}
\author{Felix Brandt \quad Florian Grundbacher\\%
Technical University of Munich, Germany
}
\begin{document}

\maketitle

\begin{abstract}
Tournament solutions play an important role within social choice theory and the mathematical social sciences at large.
We construct a tournament of order $36$ that can be partitioned into the Banks set and the bipartisan set. As a consequence, the Banks set, as well as its refinements, such as the minimal extending set and the tournament equilibrium set, can be disjoint from the bipartisan set.
\end{abstract}

\section{Introduction}

Many problems in the mathematical social sciences can be addressed using tournament solutions, i.e., 
functions that, for each tournament on a finite set of alternatives representing pairwise comparisons, associate
a non-empty subset of the alternatives~\citep[e.g.,][]{Moul86a,Lasl97a,Hudr09a,BBH15a,Suks21a}. Tournament solutions are most prevalent in social choice theory, where pairwise comparisons are typically assumed to be given by the simple majority rule.

Prominent and well-studied tournament solutions include the top cycle, the uncovered set, the Banks set, the bipartisan set, the Copeland set, and the Slater set.
Several studies have addressed whether there are tournaments for which two given tournament solutions return disjoint choice sets. 
For example, the first tournament proposed in the literature for which the Banks set and the Slater set are disjoint is of order~$75$ \citep{LaLa91a}.\footnote{\citet{LaLa91a} presented a similar tournament on $139$ alternatives in which the Banks set, the Slater set, and the Copeland set are all disjoint from each other.} Later, the order
was improved to $16$ by \citet{CHW97a} and then to~$14$ by \citet{OsVa10a}. \citeauthor{OsVa10a} have also provided a lower bound of~$11$ on the order by means of an exhaustive computer analysis.
For another example, \citet{Moul86a} presented a tournament of order $13$ in which the Banks set and the Copeland set %
are disjoint and \citet{Hudr99a} proved that this tournament is minimal.\footnote{The tournament given in the present paper strengthens this statement because the Copeland scores of all alternatives in the Banks set of this tournament are below average.}
\citet{BDS13a} addressed these questions systematically by exhaustive computer search and defined the \emph{separation index} of two tournament solutions as the order of the smallest tournament for which the choice sets returned by the two tournament solutions are disjoint.
Whether the Banks set and the bipartisan set always intersect has been identified as a challenging problem by \citet{LLL95a}, who answered the same question for many pairs of tournament solutions.\footnote{For example, they show that the minimal covering set (a coarsening of the bipartisan set) always intersects with the Banks set.}
They also note the weaker open problem concerning the relationship between the tournament equilibrium set and the bipartisan set.
\citet[][p.~42]{BDS13a} write that ``perhaps the most interesting open problem regarding the relationships between tournament solutions concerns the bipartisan set and the Banks set.''
They provide the first tournament in which the Banks set and the bipartisan set are not contained in each other and prove that their separation index is at least 11.

In this paper, we show that the separation index is at most $36$ by providing the first tournament in which the Banks set and the bipartisan set are disjoint. It follows from inclusions proven by \citet{Schw90a} and \citet{BHS15a}, respectively, that both the tournament equilibrium set and the minimal extending set can also be disjoint from the bipartisan set. 
In his monograph on tournament solutions, \citet[][p.~235]{Lasl97a} concludes his summary of the 
set-theoretic relationships between tournaments solutions by asserting that ``it is not known whether the intersection of [the Banks set] and [the bipartisan set] can be empty.''
Similarly, \citet[][p.~76]{BBH15a} laments that ``the exact location of [the bipartisan set] in this diagram [their Fig.~3.7] is unknown.''
Thanks to the tournament described in this paper and the stated inclusions, this is no longer the case.

\section{Preliminaries}

A \emph{tournament $T$} is a pair $(A,{\succ})$, where~$A$ is a finite set of alternatives and~$\succ$ a binary relation on~$A$ %
that is both \emph{asymmetric} ($x\succ y$ implies not $y\succ x$ for all $x,y\in A$) and \emph{connex} ($x\neq y$ implies $x\succ y$ or $y\succ x$ for all $x,y\in A$).
The relation $\succ$ is usually referred to as \emph{dominance relation} and we say that $x$ dominates $y$ when $x\succ y$.
We write $x\succ B$ if $x\succ y$ for all $y\in B$
and refer to the largest set $B\subseteq A$ for which $x\succ B$ as the \emph{dominion} of $x$.
The \emph{order} of a tournament $T=(A,{\succ})$ refers to the cardinality of $A$.

For a given tournament $T=(A,{\succ})$, a subset of alternatives $B\subseteq A$ is called \emph{transitive} if %
$x\succ y$ and $y\succ z$ imply $x\succ z$ for all $x,y,z\in B$. 
The \emph{Banks set} $\ba(T)$ of a tournament, named after \citet{Bank85a}, is then defined as the set of all alternatives that are maximal in some inclusion-maximal transitive subset, i.e.,
\[
\ba(T) = \{ x\in A \colon \exists B\subseteq A,\; B \text{ is transitive},\; x\succ B\setminus\{x\} \text{, and }\nexists y\in A \text{ with } y\succ B\}\text{.}
\]

Let $M_T\in \{-1,0,1\}^{A\times A}$ %
be the \emph{skew-adjacency matrix} of $T$, where 
\[
M_T(x,y) = 
\begin{cases}
1 & \text{if }x\succ y\text,\\
-1 & \text{if }y\succ x\text{, and}\\
0 & \text{if }x=y\text.
\end{cases}
\]
\citet{LLL93b} and \citet{FiRy95a} have shown independently that every tournament $T$ admits a unique probability distribution $p_T\in [0,1]^A$ such that $\sum_{x\in A} p_T(x)=1$ and
$\sum_{x\in A} p_T(x)M_T(x,y)\ge 0$ for all $y\in A$.
The distribution $p_T$ %
corresponds to the unique mixed Nash equilibrium of the symmetric zero-sum game $M_T$ (whose value is $0$). The 
\emph{bipartisan set} $\bp(T)$ of a tournament $T$ is defined as the support of this equilibrium, i.e., 
\[ \bp(T) = \{ x\in A \colon p_T(x)>0\}\text{.}
\]

\section{The Result}

We start by describing some observations that led to the construction of our example tournament, which may be skipped by the impatient reader.
It easily follows from known facts about tournament solutions that if there is a tournament in which the Banks set and the bipartisan set are disjoint, then there also exists a tournament $T$ in which $\ba(T)$ and $\bp(T)$ partition the tournament and in which the probability distribution $p_T$ is uniform \citep[see, e.g.,][Prop.~6.2.4, Thm. 6.3.2, and Thm. 7.1.3]{Lasl97a}.
Moreover, the definition of $p_T$ and its uniformity on $\bp(T)$ imply that $T$ restricted to the alternatives in $\bp(T)$ is regular, and that any alternative outside of $\bp(T)$ is dominated by a majority of the alternatives in $\bp(T)$. Next, to ensure that no alternative from $\bp(T)$ lies in $\ba(T)$, we need every (maximal) transitive subset of $\bp(T)$ to be dominated by some alternative in $\ba(T)$ (as otherwise the maximal alternative in such a subset would belong to $\ba(T)$). In particular, $\bp(T)$ cannot contain a transitive subset with more than half the alternatives from $\bp(T)$. In our example tournament, $\bp(T)$ consists of $9$ alternatives with $27$ maximal transitive subsets, each containing $4$ alternatives. Therefore, we need a unique alternative in $\ba(T)$ for each of these subsets that dominates all alternatives in the respective subset (hence the $27$ alternatives in $\ba(T)$). The tricky part is to ensure that these transitive subsets are still dominated even if we add further non-maximal alternatives from $\ba(T)$.
Our attempts to construct a smaller example using this approach were futile. 

\begin{theorem*}\label{thm:t}
There exists a tournament $T=(A,{\succ})$ of order $36$ such that $\ba(T)\cap \bp(T)=\emptyset$ and $\ba(T)\cup\bp(T)=A$.
\end{theorem*}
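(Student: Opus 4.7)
The proof will be constructive: we exhibit a specific tournament $T$ on $36$ vertices and verify that it has the required properties. Following the observations preceding the theorem, the construction splits $A$ into a regular $9$-vertex core $R$ (destined to be $\bp(T)$) and a set $C$ of $27$ auxiliary alternatives $a_S$ indexed by the inclusion-maximal transitive subsets $S$ of $R$ (destined to be $\ba(T)$).

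The plan proceeds in four steps. First, I would choose an explicit regular tournament $R$ on $9$ vertices whose inclusion-maximal transitive subsets are exactly $27$ subsets of size $4$. A natural search space is the family of rotational tournaments on $\{0,1,\ldots,8\}$ defined by $i\succ j$ iff $j-i\bmod 9$ lies in a fixed $4$-element subset $D$ of $\{1,\ldots,8\}$ with $D\cap\{-d\bmod 9:d\in D\}=\emptyset$. By vertex-transitivity, each vertex of such a tournament is the maximum of the same number of maximal transitive subsets, so it suffices to check by direct enumeration that some choice of $D$ gives exactly three maximal transitive subsets per vertex, all of size $4$, yielding $9\cdot 3=27$ in total. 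Second, I would attach the $27$ auxiliary vertices by declaring $a_S\succ S$ and $R\setminus S\succ a_S$ for each maximal transitive $4$-subset $S$, and then specify the dominance relation among the auxiliary vertices themselves, which is the combinatorial heart of the construction and is to be presented via an explicit table exploiting the cyclic symmetry of $R$.

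Third, I would verify $\bp(T)=R$ by showing that the uniform distribution $p$ with $p(x)=1/9$ for $x\in R$ satisfies the equilibrium condition $\sum_x p(x)M_T(x,y)\ge 0$ for every $y\in A$. For $y\in R$ the sum vanishes because $R$ is regular, and for $y=a_S$ it is positive because exactly five of the nine alternatives of $R$ dominate $a_S$. Uniqueness of the equilibrium then forces $\bp(T)=R$.

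Fourth, I would verify $\ba(T)=C$. Showing $C\subseteq\ba(T)$ requires, for each $a_S$, exhibiting an inclusion-maximal transitive subset topped by $a_S$; the natural candidate $\{a_S\}\cup S$ works provided no element of $A$ dominates all of $\{a_S\}\cup S$, which reduces to a local check on the auxiliary dominance relation. The harder direction is showing that no element of $R$ lies in $\ba(T)$: for every transitive $B\subseteq A$ whose maximum lies in $R$, one must produce some $y\in A$ dominating $B$. When $B\subseteq R$, any inclusion-maximal transitive extension of $B$ inside $R$ is contained in some maximal transitive $4$-subset $S$, and $a_S$ dominates it by construction. When $B$ mixes alternatives from $R$ and $C$, one must delicately exploit the chosen dominance pattern among the auxiliary vertices to produce a dominator; this is where I expect the main obstacle to lie, consistent with the authors' own remark that ``the tricky part is to ensure that these transitive subsets are still dominated even if we add further non-maximal alternatives from $\ba(T)$.'' I anticipate this final verification to require either a symmetry-driven argument or a direct (possibly computer-assisted) case analysis over all relevant transitive subsets.
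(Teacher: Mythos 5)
Your outline matches the strategy the paper itself describes in the paragraph preceding the theorem: a regular $9$-vertex core carrying the uniform equilibrium, with $27$ auxiliary alternatives, one dominating each maximal transitive $4$-subset of the core. (The paper's core $\Delta^0$ is exactly such a tournament, namely the composition of a $3$-cycle of $3$-cycles, with $27$ maximal transitive subsets of size $4$.) The equilibrium verification in your third step is also essentially identical to the paper's. However, the proposal has a genuine gap: the tournament is never actually constructed. The statement is an existence claim, and its entire content lies in the two pieces you explicitly defer --- the dominance relation among the $27$ auxiliary alternatives and between auxiliary alternatives and the five core alternatives outside ``their'' transitive set, and the verification that every transitive set mixing core and auxiliary alternatives is still dominated. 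You correctly identify this as ``the combinatorial heart'' and ``where I expect the main obstacle to lie,'' but a proof must resolve the obstacle, not locate it. There is no a priori reason such an arrangement of the auxiliary alternatives exists at all (the authors report that attempts to build smaller examples on the same template failed), so the existence of a working table cannot be taken for granted.

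For comparison, the paper closes this gap by giving the full relation in closed form via index arithmetic modulo $3$ (e.g., $v^j_{k,*}\succ v^{j-1}_{*,\ell}$ iff $k=\ell-1$), and then exploits a large automorphism group with exactly two orbits, $\Delta^0$ and $A\setminus\Delta^0$, to reduce the hard direction to a single check: every transitive subset of the dominion of one fixed alternative $v^0_{3,3}\in\Delta^0$ is dominated by one of just three designated alternatives. This also replaces your fourth-step argument for $C\subseteq\ba(T)$ with the cleaner observation that $\ba(T)\neq\emptyset$ together with transitivity of the automorphism group on $A\setminus\Delta^0$ forces $\ba(T)=A\setminus\Delta^0$, avoiding any need to exhibit maximal transitive chains topped by each auxiliary alternative. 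If you pursue your plan, building comparable symmetry into the auxiliary layer from the start is what makes the final case analysis tractable by hand.
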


\begin{proof}
We begin with the formal description of the tournament $T=(A,{\succ})$. We define $A \coloneqq \{ v^i_{j,k} \colon i \in \{0,1,2,3\}, j,k \in \{1,2,3\} \}$ and let $\succ$ be given as follows. To simplify notation, we set $3 + 1 \coloneqq 1$ and $1 - 1 \coloneqq 3$ and write $*$ when universally quantifying over all possible indices whenever we consider alternatives in $A$.
For all $i \in \{0,1,2,3\}$ and $j,k,\ell \in \{1,2,3\}$, we have
\begin{alignat*}{4}
v^i_{j,k} &\succ v^i_{j,k+1} ~ && \text,\\
v^i_{j,*} &\succ v^i_{j+1,*} && \text,\\
v^0_{j,*} &\succ v^j_{*,*} && \text,\\
v^0_{j,k} &\succ v^{j-1}_{\ell,*} && \text{if and only if } k=\ell\text,\\
v^0_{j,k} &\succ v^{j+1}_{*,\ell} && \text{if and only if } k=\ell\text{, and}\\
v^j_{k,*} &\succ v^{j-1}_{*,\ell} && \text{if and only if } k=\ell-1\text.
\end{alignat*}

The following subsets of alternatives will be of particular interest. For all $i \in \{0,1,2,3\}$ and $j \in \{1,2,3\}$, we write
\[
\Delta^i \coloneqq \{v^i_{k,\ell} : k,\ell \in \{1,2,3\}\} \quad\text{and}\quad \Delta^i_j \coloneqq \{v^i_{j,k}\colon k \in \{1,2,3\}\}\text.
\]
The structure of $T$ is visualized in \Cref{fig:t}. 

\begin{figure}[tbp]
\centering
\begin{tikzpicture}[>=Latex]
\def\nodewidth{12pt}
\def\compwidth{50pt]}
\def\bigcompwidth{140pt}
\tikzstyle{node}=[minimum width=\nodewidth,inner sep=0,circle,draw,font={\scriptsize\sffamily}]
\tikzstyle{bigcomp}=[minimum width=\bigcompwidth,inner sep=0,circle,fill=black!10,font={\sffamily}]
\tikzstyle{comp}=[minimum width=\compwidth,inner sep=0,circle,fill=black!10,font={\footnotesize\sffamily}]
\def\a{5.1};
\def\b{1.3};
\def\c{0.5};
\def\col{Set1-D!75};
\def\cola{Set1-D!15};
\def\coldx{Set1-B}; %
\def\coldy{Set1-C}; %
\def\coldz{Set1-A}; %

\node (0) at (0,0) {0};
\node[bigcomp] (1) at (30:\a) {1};
\node[bigcomp] (2) at (150:\a) {2};
\draw (-90:\a) node[bigcomp, fill=\cola] (3) {3};

\draw (0)+(-90:\b) node[comp,fill=\cola] (01) {1}
      (0)+(30:\b) node[comp] (02) {2}
      (0)+(150:\b) node[comp] (03) {3};
\draw[->] (01) edge (02) (02) edge (03) (03) edge (01);

\draw (01)+(30:\c) node[node] (011) {1}
      (01)+(150:\c) node[node] (012) {2}
      (01)+(-90:\c) node[node] (013) {3};
\draw[->] (011) edge (012) (012) edge (013) (013) edge (011);
      \draw[ultra thick,draw=\coldz] (011)+(90:0.5*\nodewidth) arc [radius=0.5*\nodewidth, start angle=90, end angle=270];
      \draw[ultra thick,draw=\coldy] (011)+(270:0.5*\nodewidth) arc [radius=0.5*\nodewidth, start angle=-90, end angle=90];
      \draw[ultra thick,draw=\coldx] (012)+(90:0.5*\nodewidth) arc [radius=0.5*\nodewidth, start angle=90, end angle=270];
      \draw[ultra thick,draw=\coldz] (012)+(-90:0.5*\nodewidth) arc [radius=0.5*\nodewidth, start angle=-90, end angle=90];
      \draw[ultra thick,draw=\coldy] (013)+(-90:0.5*\nodewidth) arc [radius=0.5*\nodewidth, start angle=-90, end angle=90];
      \draw[ultra thick,draw=\coldx] (013)+(-270:0.5*\nodewidth) arc [radius=0.5*\nodewidth, start angle=-270, end angle=-90];

\draw (02)+(150:\c) node[node] (021) {1}
      (02)+(-90:\c) node[node] (022) {2}
      (02)+(30:\c) node[node] (023) {3};
\draw[->] (021) edge (022) (022) edge (023) (023) edge (021);

\draw (03)+(-90:\c) node[node,fill=\cola, draw=\coldx,thick] (031) {1}
      (03)+(30:\c) node[node] (032) {2}
      (03)+(150:\c) node[node,fill=\col,draw=none] (033) {\textcolor{white}3}; 
      \draw[ultra thick,draw=\coldz] (033)+(30:0.5*\nodewidth) arc [radius=0.5*\nodewidth, start angle=30, end angle=150];
      \draw[ultra thick,draw=\coldx] (033)+(150:0.5*\nodewidth) arc [radius=0.5*\nodewidth, start angle=150, end angle=270];
      \draw[ultra thick,draw=\coldy] (033)+(-90:0.5*\nodewidth) arc [radius=0.5*\nodewidth, start angle=-90, end angle=30];
      \draw[ultra thick,draw=\coldz] (031)+(30:0.5*\nodewidth) arc [radius=0.5*\nodewidth, start angle=30, end angle=150];
      \draw[ultra thick,draw=\coldx] (031)+(150:0.5*\nodewidth) arc [radius=0.5*\nodewidth, start angle=150, end angle=270];
      \draw[ultra thick,draw=\coldy] (031)+(-90:0.5*\nodewidth) arc [radius=0.5*\nodewidth, start angle=-90, end angle=30];
\draw[->] (031) edge (032) (032) edge (033) (033) edge (031);

\draw (1)+(90:\b) node[comp,draw=gray] (11) {1}
      (1)+(-30:\b) node[comp,draw=gray] (12) {2}
      (1)+(-150:\b) node[comp,draw=gray] (13) {3};
\draw[->] (11) edge (12) (12) edge (13) (13) edge (11);

\draw (11)+(-150:\c) node[node] (111) {1}
      (11)+(-30:\c) node[node] (112) {2}
      (11)+(90:\c) node[node,fill=\cola] (113) {3};
      \draw[ultra thick,draw=\coldz] (113)+(30:0.5*\nodewidth) arc [radius=0.5*\nodewidth, start angle=30, end angle=150];
      \draw[ultra thick,draw=\coldx] (113)+(150:0.5*\nodewidth) arc [radius=0.5*\nodewidth, start angle=150, end angle=270];
      \draw[ultra thick,draw=\coldy] (113)+(-90:0.5*\nodewidth) arc [radius=0.5*\nodewidth, start angle=-90, end angle=30];

\draw (12)+(90:\c) node[node] (121) {1}
      (12)+(-150:\c) node[node] (122) {2}
      (12)+(-30:\c) node[node,fill=\cola] (123) {3};
      \draw[ultra thick,draw=\coldz] (123)+(30:0.5*\nodewidth) arc [radius=0.5*\nodewidth, start angle=30, end angle=150];
      \draw[ultra thick,draw=\coldx] (123)+(150:0.5*\nodewidth) arc [radius=0.5*\nodewidth, start angle=150, end angle=270];
      \draw[ultra thick,draw=\coldy] (123)+(-90:0.5*\nodewidth) arc [radius=0.5*\nodewidth, start angle=-90, end angle=30];

\draw (13)+(-30:\c) node[node] (131) {1}
      (13)+(90:\c) node[node] (132) {2}
      (13)+(-150:\c) node[node, fill=\cola] (133) {3};
      \draw[ultra thick,draw=\coldz] (133)+(30:0.5*\nodewidth) arc [radius=0.5*\nodewidth, start angle=30, end angle=150];
      \draw[ultra thick,draw=\coldx] (133)+(150:0.5*\nodewidth) arc [radius=0.5*\nodewidth, start angle=150, end angle=270];
      \draw[ultra thick,draw=\coldy] (133)+(-90:0.5*\nodewidth) arc [radius=0.5*\nodewidth, start angle=-90, end angle=30];

\draw (2)+(-150:\b) node[comp,draw=gray] (21) {1}
      (2)+(90:\b) node[comp,draw=gray] (22) {2}
      (2)+(-30:\b) node[comp, fill=\cola] (23) {3};
\draw[->] (21) edge (22) (22) edge (23) (23) edge (21);
      \draw[ultra thick,draw=\coldz] (23)+(30:0.5*\compwidth) arc [radius=0.5*\compwidth, start angle=30, end angle=150];
      \draw[ultra thick,draw=\coldx] (23)+(150:0.5*\compwidth) arc [radius=0.5*\compwidth, start angle=150, end angle=270];
      \draw[ultra thick,draw=\coldy] (23)+(-90:0.5*\compwidth) arc [radius=0.5*\compwidth, start angle=-90, end angle=30];

\draw (21)+(-30:\c) node[node] (211) {1}
      (21)+(90:\c) node[node] (212) {2}
      (21)+(-150:\c) node[node] (213) {3};

\draw (22)+(-150:\c) node[node,fill=\coldx!50] (221) {1} %
      (22)+(-30:\c) node[node,fill=\coldy!50] (222) {2} %
      (22)+(90:\c) node[node,fill=\coldz!50] (223) {3}; %

\draw (23)+(90:\c) node[node] (231) {1}
      (23)+(-150:\c) node[node] (232) {2}
      (23)+(-30:\c) node[node] (233) {3};

\draw (3)+(-30:\b) node[comp,fill=\cola] (31) {1}
      (3)+(-150:\b) node[comp,fill=\cola] (32) {2}
      (3)+(90:\b) node[comp,fill=\cola] (33) {3};
\draw[->] (31) edge (32) (32) edge (33) (33) edge (31);      
      
      \draw[ultra thick,draw=\coldx] (31)+(90:0.5*\compwidth) arc [radius=0.5*\compwidth, start angle=90, end angle=270];
      \draw[ultra thick,draw=\coldz] (31)+(-90:0.5*\compwidth) arc [radius=0.5*\compwidth, start angle=-90, end angle=90];
      \draw[ultra thick,draw=\coldy] (32)+(-90:0.5*\compwidth) arc [radius=0.5*\compwidth, start angle=-90, end angle=90];
      \draw[ultra thick,draw=\coldx] (32)+(-270:0.5*\compwidth) arc [radius=0.5*\compwidth, start angle=-270, end angle=-90];
      \draw[ultra thick,draw=\coldz] (33)+(90:0.5*\compwidth) arc [radius=0.5*\compwidth, start angle=90, end angle=270];
      \draw[ultra thick,draw=\coldy] (33)+(270:0.5*\compwidth) arc [radius=0.5*\compwidth, start angle=-90, end angle=90];

\draw (31)+(90:\c) node[node] (311) {1}
      (31)+(-150:\c) node[node] (312) {2}
      (31)+(-30:\c) node[node] (313) {3};

\draw (32)+(-30:\c) node[node] (321) {1}
      (32)+(90:\c) node[node] (322) {2}
      (32)+(-150:\c) node[node] (323) {3};

\draw (33)+(-150:\c) node[node] (331) {1}
      (33)+(-30:\c) node[node] (332) {2}
      (33)+(90:\c) node[node] (333) {3};

\draw[->,out=0,in=240] (01) to (1);
\draw[->,out=120,in=0] (02) to (2);
\draw[->,out=240,in=120] (03) to (3);

\draw[->] (012) edge[bend right=10] (32) (013) edge (33) (011) edge[bend left=10] (31);
\draw[->] (022) edge[bend right=10] (12) (023) edge (13) (021) edge[bend left=10] (11);
\draw[->] (032) edge[bend right=10] (22) (033) edge (23) (031) edge[bend left=10] (21);

\end{tikzpicture}

\caption{Incomplete depiction of tournament $T=(A,\succ)$ with $\ba(T)\cap \bp(T)=\emptyset$. 
The 36 alternatives $\{v^i_{j,k} \colon i \in \{0,1,2,3\}, j,k \in \{1,2,3\}\}$ are labeled such that $i$ denotes the outmost triangle $\Delta^i$, $j$ the triangle $\Delta^i_j$ within $i$, and $k$ the alternative within triangle $\Delta^i_j$.
The automorphism group of the tournament admits only two orbits: $\Delta^0=\{v^0_{j,k} \colon j,k \in \{1,2,3\}\}$ and $A\setminus \Delta^0$.
$\bp(T)=\Delta^0$ and $\ba(T)=A\setminus \Delta^0$.
The dominion of the purple alternative $v^0_{3,3}$ is highlighted in light purple. Every transitive subtournament of this dominion is dominated by one of the three alternatives in $\Delta^2_2$, highlighted in blue, green, and red. The dominions of these alternatives within the dominion of $v^0_{3,3}$ are indicated by respectively colored boundaries.
}
\label{fig:t}

\end{figure}

In order to identify the elements of $\bp(T)$, consider the probability distribution that assigns probability $\nicefrac19$ to the alternatives in $\Delta^0$ and probability 0 to all other alternatives. We claim that this distribution satisfies $\sum_{x\in A} p_T(x)M_T(x,y)\ge 0$ for all $y \in A$. All $y\in \Delta^0$ dominate four alternatives in $\Delta^0$ and are dominated by four alternatives in $\Delta^0$, respectively. Hence, $\sum_{x\in A} p_T(x)M_T(x,y)=0$. All $y \in A \setminus \Delta^0$ dominate precisely four alternatives in $\Delta^0$ and are dominated by five alternatives in $\Delta^0$. For example, $v^1_{1,3}$ is dominated by the three alternatives in $\Delta^0_1$, by $v^0_{2,1}$, and by $v^0_{3,3}$. Hence, $\sum_{x\in A} p_T(x)M_T(x,y)=\nicefrac 19> 0$.
Per definition, $\bp(T)$ consists of all alternatives $x \in A$ with $p_T(x)>0$, giving $\bp(T)=\Delta^0$.

To simplify the computation of $\ba(T)$, we first observe that $T$ admits the following four %
graph automorphisms: %
the map $\varphi \colon A \to A$, defined for all $i,j,k \in \{1,2,3\}$ by
\[ 
    \varphi(v^0_{i,j}) \coloneqq v^0_{i+1,j}
    \quad\text{and}\quad
    \varphi(v^i_{j,k}) \coloneqq v^{i+1}_{j,k},
\]
and for any $\ell \in \{1,2,3\}$ the map $\psi_\ell \colon A \to A$,
defined for all $i,j,k \in \{1,2,3\}$ by
\[
    \psi_\ell(v^0_{i,j})\coloneqq
        \begin{cases}
            v^0_{i,j+1} & \text{if } i = \ell\text, \\
            v^0_{i,j}   & \text{if } i \neq \ell\text,
        \end{cases}
    \quad\text{and}\quad
    \psi_\ell(v^i_{j,k})\coloneqq
        \begin{cases}
            v^i_{j,k}   & \text{if } i= \ell\text, \\
            v^i_{j+1,k} & \text{if } i = \ell-1\text{, and}\\
            v^i_{j,k+1} & \text{if } i = \ell+1\text.
        \end{cases}
\]
Intuitively, $\varphi$ rotates the entire tournament counterclockwise by $120^\circ$. $\psi_\ell$ rotates the alternatives in the small triangles $\Delta^0_\ell$ and $\Delta^{\ell+1}_*$ counterclockwise,
and the entire small triangles in $\Delta^{\ell-1}$ clockwise. We skip the details of checking that these four maps are indeed graph automorphisms. 

It is straightforward to see that for any pair $x,y \in \Delta^0$, an appropriate concatenation of these four automorphisms maps $x$ to $y$. In fact, the automorphism group induced by these maps admits only two equivalence classes (orbits): $\Delta^0$ and $A\setminus\Delta^0$. In order to prove that $\ba(T)\cap \bp(T)=\emptyset$, it thus suffices to show that an arbitrary element of $\bp(T)=\Delta^0$ fails to be contained in $\ba(T)$. 
To this end, consider alternative $v^0_{3,3}$, highlighted in dark purple in \Cref{fig:t}. The dominion of this alternative is highlighted in light purple and consists of $v^0_{3,1}$, $v^1_{1,3}$, $v^1_{2,3}$, $v^1_{3,3}$,
$\Delta^2_3$, $\Delta^0_1$, and $\Delta^3$.
We claim that every transitive subset $B$ of this dominion is dominated by one of the three alternatives in $\Delta^2_2$, highlighted in blue, green, and red in \Cref{fig:t}. The dominions of these alternatives within the dominion of $v^0_{3,3}$ are indicated by respectively colored boundaries. First, observe that each of the alternatives in $\Delta^2_2$ dominates $\Delta^2_3$, $v^0_{3,1}$, $v^0_{3,3}$, $v^1_{1,3}$, $v^1_{2,3}$, and $v^1_{3,3}$. %
We can thus focus on the alternatives in $\Delta^0_1$ and $\Delta^3$.
Since both of these sets represent triangles (of subtriangles), $B$ can only contain at most two elements of $\Delta^0_1$ and elements of at most two of the subtriangles $\Delta^3_1$, $\Delta^3_2$, and $\Delta^3_3$ of $\Delta^3$. %
Additionally note %
that all edges between $\Delta^0_1$ and $\Delta^3$ that are not depicted in \Cref{fig:t} go from $\Delta^3$ to $\Delta^0_1$.
Thus, for any $j \in \{1,2,3\}$, the triples $v^0_{1,j}$, $v^0_{1,j+1}$, $\Delta^3_{j+1}$ and $v^0_{1,j}$, $\Delta^3_j$, $\Delta^3_{j+1}$ represent further triangles. From this, it is straightforward to verify that all %
transitive subsets of $\Delta^0_1\cup \Delta^3$ are subsets of %
$\{v^0_{1,2},v^0_{1,3}\}\cup\Delta^3_1\cup\Delta^3_2$,
or %
$\{v^0_{1,1},v^0_{1,3}\}\cup\Delta^3_2\cup\Delta^3_3$,
or %
$\{v^0_{1,1},v^0_{1,2}\}\cup\Delta^3_1\cup\Delta^3_3$. 
The first set is dominated by $v^2_{2,1}$, the second by $v^2_{2,2}$, and the third by $v^2_{2,3}$, as can be seen by the boundary colors in \Cref{fig:t}.
We conclude that $v^0_{3,3}\not\in \ba(T)$ %
and, by the automorphism argument from above, $\Delta^0 \cap \ba(T) = \emptyset$. %

Since $\ba(T)\neq \emptyset$, some alternative in $A\setminus \Delta^0$ has to be contained in $\ba(T)$. Moreover, since for any pair $x,y \in A\setminus \Delta^0$, an appropriate concatenation of the four automorphisms described above maps $x$ to $y$, we obtain that $\ba(T)=A\setminus \Delta^0$.
\end{proof}

It can be shown that the theorem holds independently of how the edges in each of the small outer triangles $\Delta^*_*$ are oriented. In particular, they need not form cycles. We chose cycles to simplify the proof by exploiting automorphisms. 
Alternatively, the argument for $v^0_{3,3}$ can be generalized to any alternative in $\Delta^0$. This is possible because the edges in each of the small outer triangles $\Delta^*_*$ did not play a role in the proof for $v^0_{3,3} \notin \ba(T)$.

While the tournament $T$ was found manually, we have additionally verified the correctness of the statement using a computer.
Remarkably, the degree (or Copeland score) of each alternative in $\bp(T)$ is 19, while that of each alternative in $\ba(T)$ is 17. Hence, in this tournament, the alternatives in $\ba(T)$ are precisely those with below-average degrees.

The two remaining open questions concerning the inclusion relationships between commonly studied tournament solutions are whether the tournament equilibrium set is contained in the minimal extending set and whether the minimal extending set is contained in the minimal covering set \citep[see][]{BHS15a}. Another open problem is whether the Banks set satisfies Kelly-strategyproofness \citep[see][]{BBH15a}.

\subsection*{Acknowledgments}
This material is based on work supported by the Deutsche Forschungsgemeinschaft under grants {BR~2312/11-2} and {BR~2312/12-1}.


\end{document}